\theoremstyle{remark}
\newtheorem{remark}{Remark}
\author{Junchen Fu} 
\affiliation{
  \institution{University of Glasgow}\streetaddress{}\city{Glasgow}\country{United Kingdom}}
\email{j.fu.3@research.gla.ac.uk}
\author{Xuri Ge}
\affiliation{
\institution{Shandong University}\streetaddress{}\city{Jinan}\country{China}}
\email{xuri.ge@sdu.edu.cn}
\author{Alexandros Karatzoglou}
\affiliation{
\institution{Amazon}\streetaddress{}\city{Barcelona}\country{Spain}}
\email{alexandros.karatzoglou@gmail.com}
\author{Ioannis Arapakis}
\affiliation{
\institution{Telef\'{o}nica Scientific Research }\streetaddress{}\city{Barcelona}\country{Spain}}
\email{arapakis.ioannis@gmail.com}
\author{Suzan Verberne}\affiliation{
\institution{Leiden University}\streetaddress{}\city{Leiden}\country{Netherlands}}
\email{s.verberne@liacs.leidenuniv.nl}
\author{Joemon M. Jose}\affiliation{
\institution{University of Glasgow}\streetaddress{}\city{Glasgow}\country{United Kingdom}}
\email{joemon.jose@glasgow.ac.uk}
\author{Zhaochun Ren}
\affiliation{
  \institution{Leiden University}
  \city{Leiden}
  \country{Netherlands}
}
\email{z.ren@liacs.leidenuniv.nl}
\g@addto@macro\addresses{\@@authornotemark{2}} 
\g@addto@macro\@authornotes{\footnotetext[2]{Corresponding author.}}
\begin{document}

\title[Differentiable Semantic ID for Generative Recommendation]{Differentiable Semantic ID for Generative Recommendation}

\begin{abstract}
Generative recommendation provides a novel paradigm in which each item is represented by a discrete semantic ID (SID) learned from rich content. 
Most methods treat SIDs as predefined and train recommenders under static indexing. In practice, SIDs are optimized only for content reconstruction rather than recommendation accuracy. This leads to an \textit{objective mismatch}: the system optimizes an indexing loss to learn the SID, and a recommendation loss for interaction prediction, but because the tokenizer is trained independently, the recommendation loss cannot update it.
A natural approach is to make semantic indexing differentiable so recommendation gradients can directly influence SID learning, but this often causes codebook collapse with only a few codes used. We attribute this to early deterministic assignments that limit codebook exploration, leading to imbalance and unstable optimization.

In this paper, we therefore propose \textbf{DIGER} (\underline{D}ifferentiable Semantic \underline{I}D for \underline{GE}nerative \underline{R}ecommendation). 
DIGER is a first step towards an effective differentiable semantic ID for generative recommendation. The Gumbel noise explicitly encourages early-stage exploration over codes, mitigating collapse and improving code utilization. To better balance exploration and convergence, we introduce two uncertainty decay strategies that reduce the Gumbel noise, enabling a gradual shift from early-stage exploration to the exploitation of learned SIDs. 
Extensive experiments across multiple public datasets demonstrate consistent improvements from differentiable semantic ID. 
These results confirm the effectiveness of aligning indexing and recommendation objectives through differentiable SIDs. This identifies differentiable SID as a promising area of study.
Our code is released under~\url{https://github.com/junchen-fu/DIGER}.
\end{abstract}

\keywords{Generative Recommendation; Differentiable Semantic ID; Exploration and Exploitation; DRIL; Uncertainty Decay}

\maketitle

\section{Introduction}
\begin{figure}[t]
  \centering
  \begin{subfigure}{\linewidth}
    \centering
    \includegraphics[width=\linewidth]{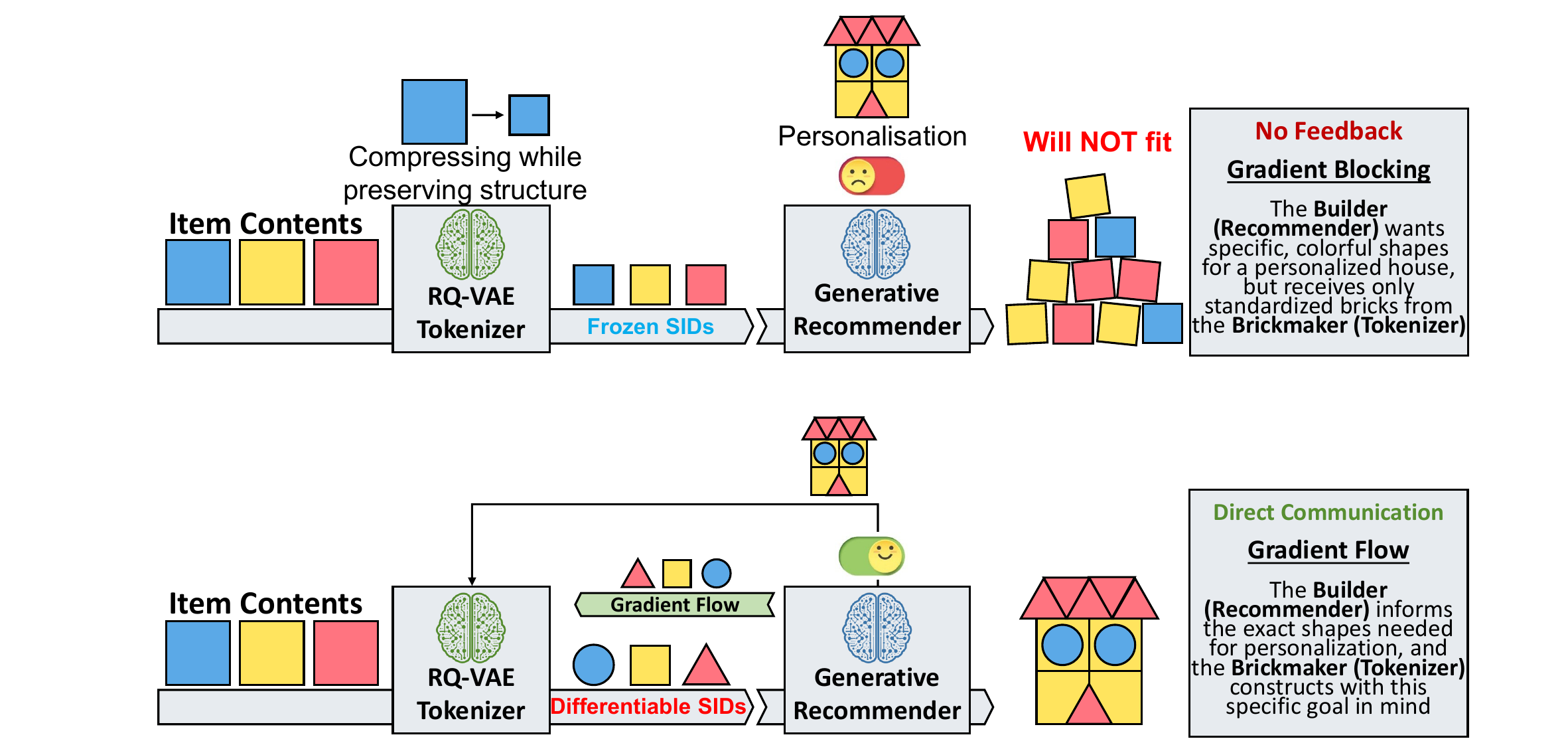}
    \subcaption{Conventional pipeline freezes RQ-VAE semantic IDs, causing gradient blocking from the recommender and objective mismatch.}
    \label{fig:motivation:a}
  \end{subfigure}

  \begin{subfigure}{\linewidth}
    \centering
    \includegraphics[width=\linewidth]{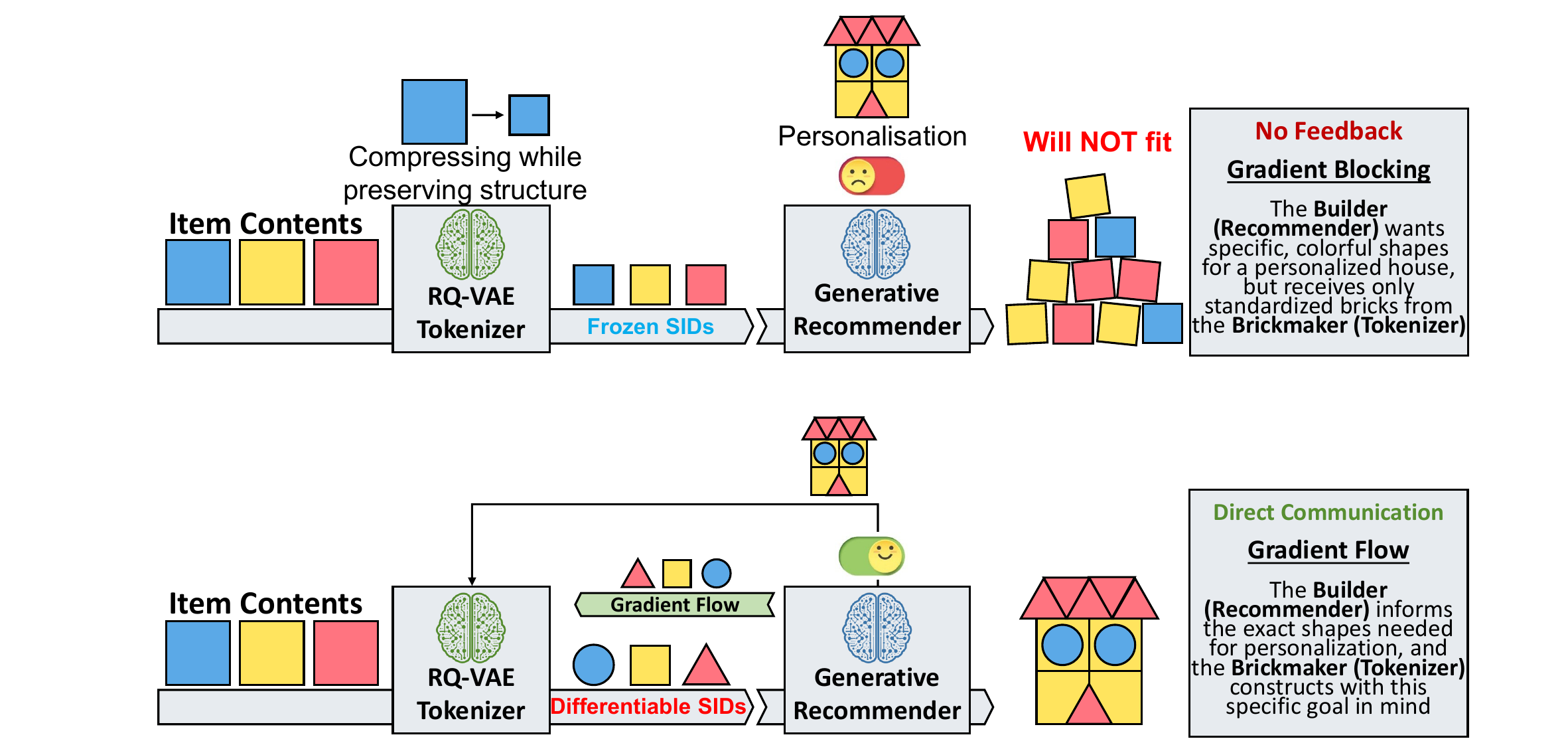}
    \subcaption{Our differentiable semantic IDs enable gradient flow for joint optimization toward recommendation utility.}
    \vspace{-0.15in}
    \label{fig:motivation:b}
  \end{subfigure}

  \caption{Conventional vs.\ Generative recommendation with Differentiable SID. A “brickmaker--builder” sketch is used to illustrate the tokenizer and recommender.}
  \vspace{-0.2in}
  \label{fig:motivation}
\end{figure}

Generative recommendation has introduced a paradigm shift in modern recommender systems~\cite{rajput2023recommender,geng2022recommendation,zhao2026unifying}. Rather than representing items with continuous embeddings alone, recent methods leverage rich item content, most notably textual descriptions, to construct discrete semantic identifiers (SIDs)~\cite{tay2022transformer,wang2025generative,han2025mtgr,hou2025generative} and thereafter formulate recommendation as next-SID generation. This design introduces desirable properties, such as compact indexing, semantic interpretability, and a natural interface to sequence modeling~\cite{hou2025survey}.

Despite these advantages, current practices still suffer from a mismatch problem. First, a vector-quantized model such as RQ-VAE~\cite{lee2022autoregressive} is trained with a reconstruction objective to learn a discrete codebook and assign each item a fixed sequence of semantic IDs. Second, given a user's interaction history, a sequence generator (e.g., a Transformer) is trained to predict the next-item semantic IDs \cite{rajput2023recommender,tay2022transformer}. 
While this pipeline is effective, the indexing loss and the recommendation loss are optimized independently. It introduces a fundamental issue: \emph{semantic indexing is not aligned with the downstream recommendation objective}. This mismatch can degrade recommendation performance, since reconstruction-oriented semantic IDs are not optimized for ranking and remain predefined during user interest evolution~\cite{ye2025harnessing}, limiting personalization and preference-aware representation learning.

Since semantic IDs are precomputed and frozen, the recommendation loss cannot propagate gradients back to the codebook or the ID assignment. In other words, the indexing space is optimized for content reconstruction rather than for next-item prediction. As illustrated in Figure~\ref{fig:motivation}(a), the tokenizer learns semantic IDs for reconstruction, not personalization. Because these SIDs are frozen, recommendation gradients cannot reach the tokenizer, so the recommender is stuck with reconstruction-oriented codes that are misaligned with recommendation utility. 
Consequently, the overall effectiveness of generative recommendations is constrained due to this misalignment.

An intuitive solution to the objective mismatch is to make semantic indexing differentiable, allowing recommendation gradients to jointly optimize the encoder and the generative semantic IDs (Figure~\ref{fig:motivation}(b)), similar to MoRec~\cite{yuan2023go,li2023exploring} in traditional modality-based recommendation, where jointly training BERT-based item encoders with recommendation objectives significantly outperforms frozen features.
Yet, directly differentiable semantic IDs for semantic indexing remain underexplored~\cite{hou2025survey}. 
A main challenge in passing gradients through a codebook is its discrete nature: backpropagating through discrete SIDs is inherently non-trivial. The most common workaround --- the straight-through estimator (STE)~\cite{bengio2013estimating} used in RQ-VAE training~\cite{lee2022autoregressive} --- is unstable in practice and can trigger semantic ID collapse~\cite{huh2023straightening}. As shown in Figure~\ref{fig:gumbel-vs-ste}, naive differentiable optimization with STE quickly becomes over-confident in the early stages of training, causing a small subset of codes to dominate, while the rest are rarely selected, i.e., severely reduced code utilization. This collapse is further observed along with large training-time variance: STE exhibits a markedly larger standard deviation in code balance (error bars) across codebook levels, indicating volatile and inconsistent code usage throughout training. Such an imbalance directly harms recommendation quality (e.g., NDCG@10), making ``naively'' differentiable semantic indexing brittle and frequently under-performing.

\begin{figure}[t]
    \centering
    \includegraphics[width=\linewidth]{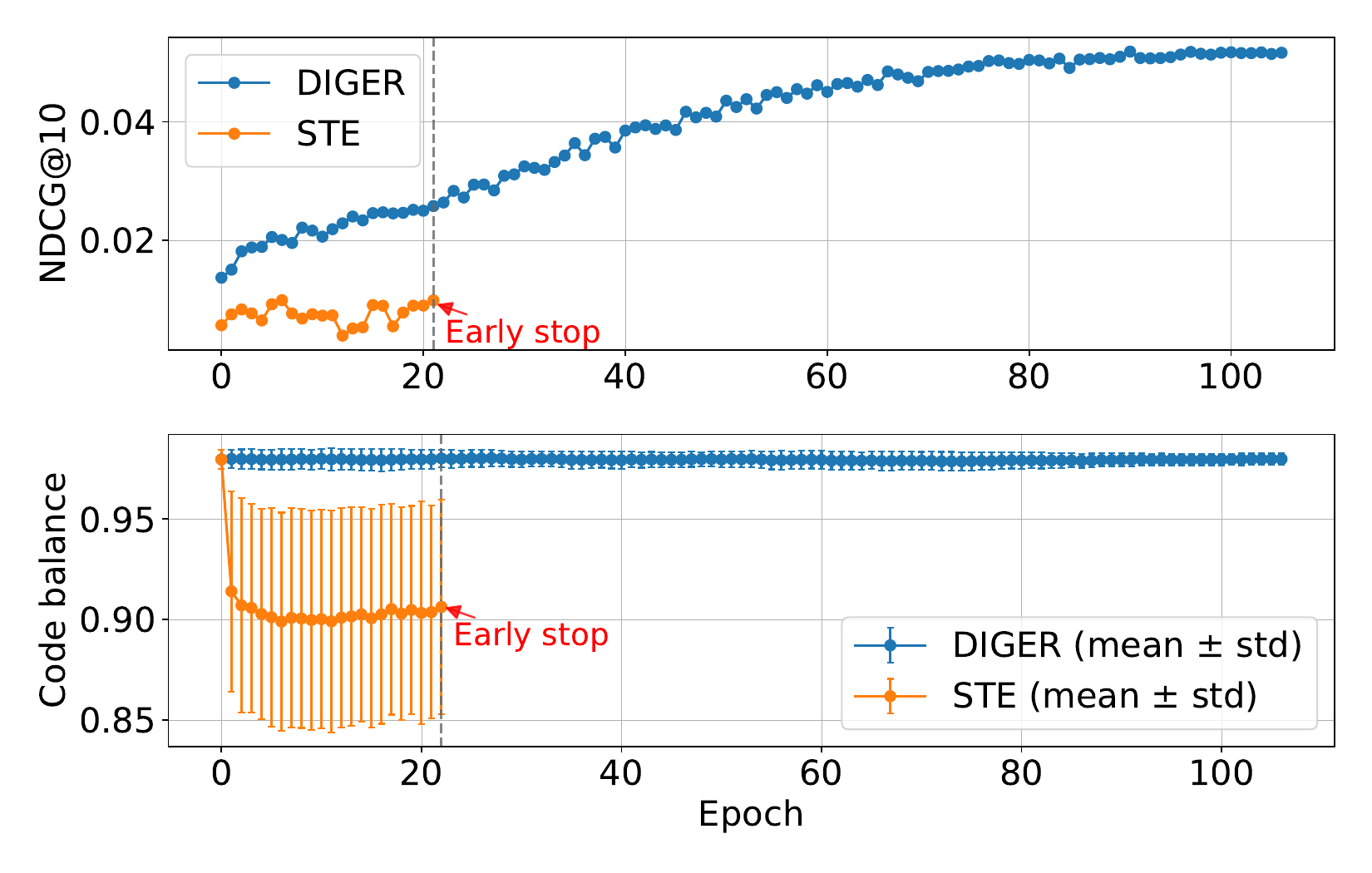}
    \vspace{-0.25in}
    \caption{Comparison of DIGER and STE on B-Shop Dataset. Top: validation NDCG@10 across epochs. Bottom: code balance summarized by the mean over three codebook levels' coverage, with error bars indicating the standard deviation across levels. Grey dashed vertical line marks the STE early-stop epoch.}
    \label{fig:gumbel-vs-ste}
    \vspace{-0.2in}
\end{figure}

In this work, we propose \textbf{DIGER} (\underline{D}ifferentiable Semantic \underline{I}D for \underline{GE}nerative \underline{R}ecommendation), which enables semantic IDs to be learned jointly with a generative recommender\footnote{DIGER is metaphorically used as a digger that breaks through and opens a direct gradient pathway between the generative recommender and the semantic ID.}. To enable stable joint training, we design a differentiable SID framework with exploratory learning (DRIL) with Gumbel noise~\cite{jang2016categorical} to address the codebook collapse issues. Motivated by the exploration-exploitation paradigm~\cite{gupta2006interplay}, we further propose two uncertainty decay strategies which gradually reduce the stochasticity induced by the injected noise to better align with training objective.
As shown in Figure~\ref{fig:gumbel-vs-ste}, training remains stable, with balanced code usage and steadily improving performance. The injected Gumbel noise promotes early-stage exploration over codes, alleviating collapse and improving codebook utilization. This design enables stable joint optimization and consistently improves next-SID generation. Our contributions are as follows:
\begin{itemize}
\item We propose DIGER, an effective differentiable semantic indexing framework. To the best of our knowledge, it represents a pioneering effort to effectively enable direct joint optimization of semantic IDs and generative recommenders.
\item To realize DIGER, we introduce a two-part paradigm consisting of Differentiable Semantic ID with Exploratory Learning (DRIL), an exploration–exploitation–inspired framework equipped with two Uncertainty Decay (UD) strategies.
\item Experiments on three public datasets demonstrate the effectiveness of jointly training with differentiable semantic IDs.
\end{itemize}

\vspace{-0.1in}
\section{Related Work}

\noindent \textbf{Joint Optimization for Recommendation.}
 Jointly training both item and user encoders for recommendation has been extensively studied in modality-based sequential recommender systems, motivated by the objective mismatch between pretrained representation encoders and sequential recommendation tasks~\cite{yuan2023go,wu2021empowering,elsayed2022end,fu2024exploring,fu2024iisan}. A central theme is to adapt modality encoders jointly with the recommender so that learned representations are optimized for sequential prediction. Representative efforts include joint training of visual backbones, such as ResNet~\cite{elsayed2022end}, and fine-tuning language encoders like BERT, to better align textual representations with user--item interactions~\cite{wu2021empowering}. Recent work has extended this paradigm by leveraging large language models (LLMs) as task backbones or feature generators for recommendation~\cite{li2023exploring,liu2024once,fu2025efficient}. Altogether, these studies leverage joint optimization to address the representation--task misalignment challenge, by explicitly allowing recommendation gradients to flow into the representation module (e.g., connecting item encoders with user modeling for joint backpropagation), thereby aligning learned features with the downstream objective, while also highlighting practical optimization challenges when the encoder is large or contains discrete components.

\noindent \textbf{Generative Recommendation.}
Generative recommender systems formulate recommendation as a sequence generation problem over discrete item identifiers~\cite{hou2025generating,wang2025generative,han2025mtgr,hou2025generative,li2025survey,liu2025onerec,wang2025empowering,lin2025order,zhao2025unifying,zhang2025killing,li2025dimerec,zhai2025simple,wang2025generative2,liu2025discrec,zhang2025c2t,zheng2025pre}. To improve generalization, recent approaches replace random IDs with content-informed, semantic IDs. For example, P5~\cite{geng2022recommendation} unifies heterogeneous recommendation signals into natural-language sequences and adopts a text-to-text model for recommendation. TIGER~\cite{rajput2023recommender} introduces a Transformer-based indexing mechanism, where an RQ-VAE codebook assigns each item a sequence of semantic tokens that are subsequently predicted by a decoder. LETTER~\cite{wang2024learnable} further improves semantic ID quality by injecting hierarchical semantics and collaborative signals into residual vector quantization. However, relatively few studies have investigated the objective mismatch between the semantic ID tokenizer and the downstream generative recommender.

Recently, ETEGRec \cite{liu2024generative} proposed a distillation-based approach for aligning tokenizers with generative recommendation models. However, rather than enabling joint optimization via differentiable semantic IDs, it relies on an alternating distillation scheme to handle the discreteness of semantic IDs, where the recommender and the semantic ID tokenizer are trained by distilling knowledge into each other in separate stages. Since the recommendation loss cannot be directly backpropagated through the discrete indexing process, the indexing module is optimized only through indirect surrogate objectives. 
Consequently, the core mismatch caused by the blocked gradient flow remains unresolved. 

\smallskip
\noindent In contrast, our work enables \emph{direct} joint learning for generative recommendation by making semantic IDs differentiable, so that the semantic indexing mechanism and the recommender can be optimized jointly throughout training.

\begin{figure*}[htbp]
    \centering
    \includegraphics[width=0.9\textwidth]{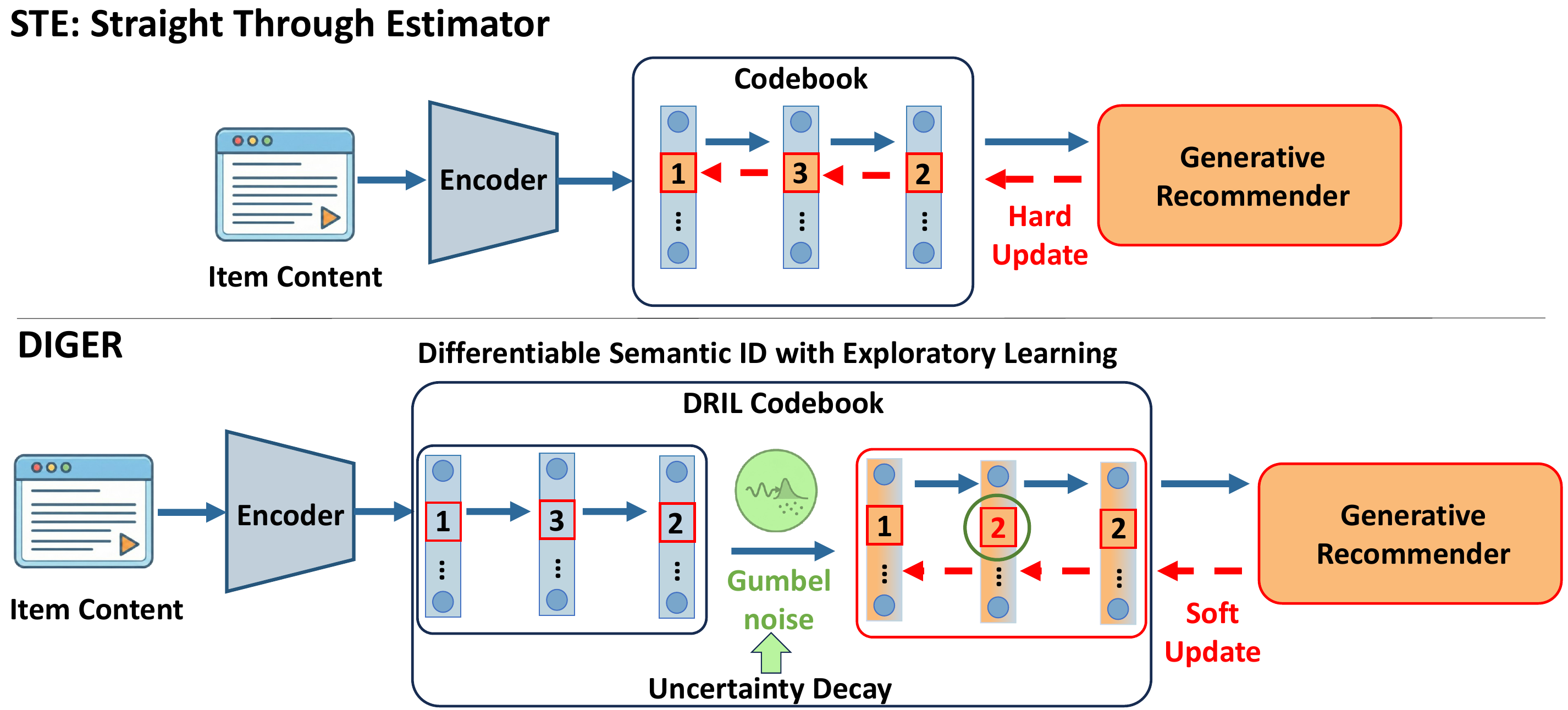}
    \vspace{-0.1in}
    \caption{
Comparison of STE vs. DIGER. 
\textbf{STE:} Relies on deterministic hard selection (e.g., sequence 1-3-2), where gradients only backpropagate to the selected indices (\textit{Hard Update}). 
\textbf{DIGER:} Introduces \textit{Gumbel noise} to encourage exploratory learning. For instance, the noise may shift a deterministic assignment from code 3 to 2 (changing the path from 1-3-2 to 1-2-2), exploring alternative semantic representations. Crucially, DIGER employs \textit{Soft Update}, allowing gradients to flow to \textbf{all} codebook weighted by their Gumbel-Softmax probabilities, while the noise level is progressively reduced via uncertainty decay strategies.
}
\vspace{-0.15in}
    \label{fig:architecture_overview}
\end{figure*}

\begin{figure}[t]
    \centering
    \includegraphics[width=0.9\linewidth]{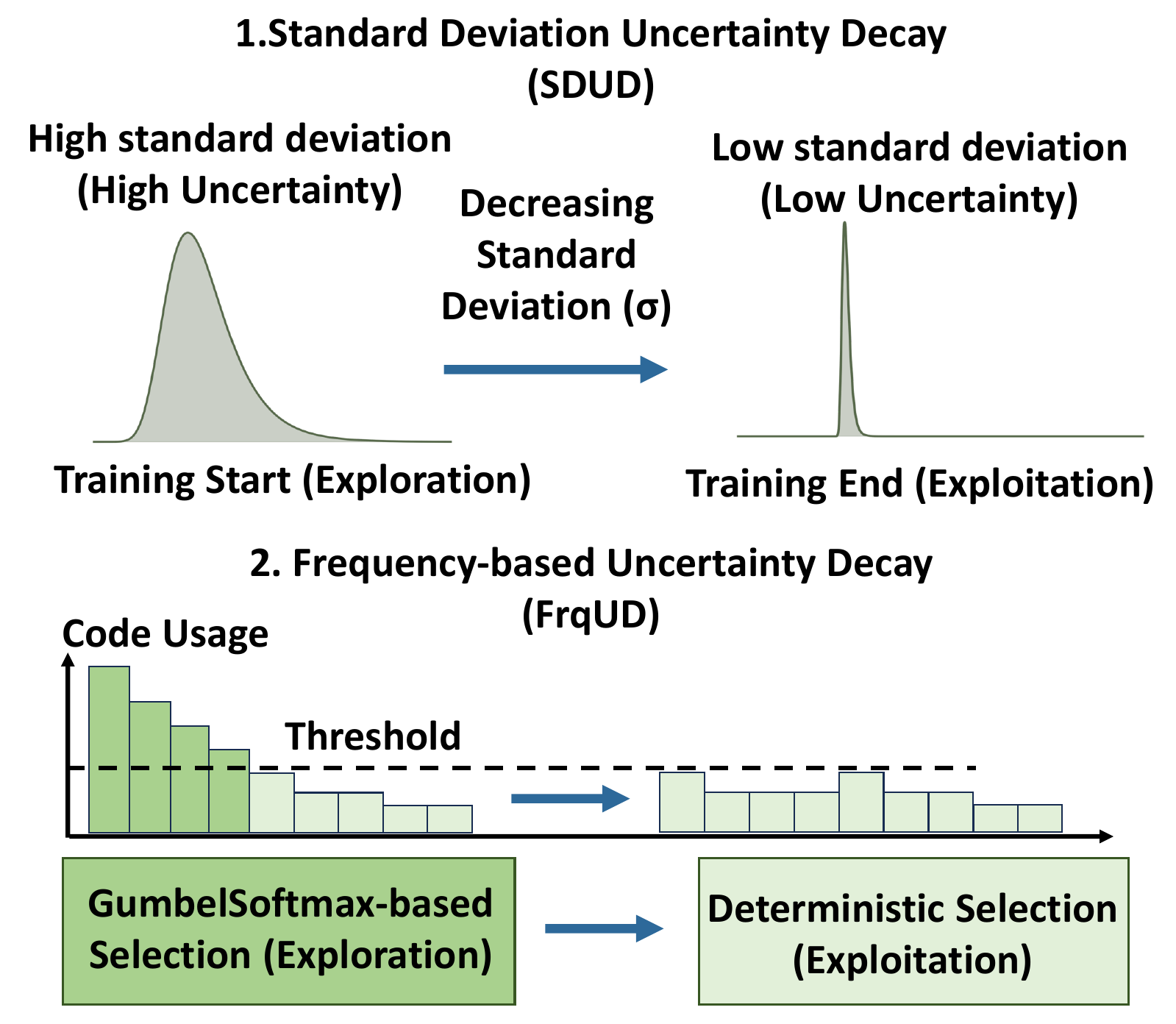}
    \vspace{-0.1in}
    \caption{Two uncertainty decay strategies for exploration--exploitation: standard deviation--based uncertainty decay (SDUD) and frequency-based uncertainty decay (FrqUD).}
    \label{fig:UD}
    \vspace{-0.2in}
\end{figure}

\vspace{-0.08in}
\section{Preliminaries}
\label{sec:preliminary}

Following \cite{rajput2023recommender}, we formulate generative recommendation as an auto-regressive \emph{next semantic ID (SID) prediction problem}, i.e., \emph{given a user's interaction history, a generative model predicts the SID sequence of the next item the user will interact with}.

Let $\mathcal{V}$ denote the set of items. 
We obtain discrete SIDs for items using an RQ-VAE based tokenizer: given an item $v\in\mathcal{V}$ (e.g., its content features), the tokenizer maps it to a length-$m$ discrete code sequence (its Semantic ID):
\begin{equation}
\mathbf{z}_v = (c_{v,1},\ldots,c_{v,m}), \quad c_{v,j}\in\{1,\ldots,K\},
\end{equation}
where $K$ is the SID vocabulary size. Each code $c_{v,j}$ corresponds to an entry in a shared embedding/codebook and is used as the discrete token fed into the generative model.

For a user $u$, suppose the interaction history up to time $t$ is $\mathbf{h}_u^{(t)}=(v_1,\ldots,v_t)$. We convert the history into a token sequence by concatenating the SIDs of previously interacted items in chronological order:
\begin{equation}
\mathbf{x}_u^{(t)} = \mathbf{z}_{v_1}\oplus\cdots\oplus \mathbf{z}_{v_t},
\end{equation}
where $\oplus$ denotes sequence concatenation. The generative recommender is a sequence prediction model (e.g., a Transformer decoder) that learns the conditional distribution of the next-item SID: $p_\theta(\mathbf{z}_{v_{t+1}} \mid \mathbf{x}_u^{(t)})$,
and produces the SID in an auto-regressive manner:
\begin{equation}
p_\theta(\mathbf{z}_{v_{t+1}} \mid \mathbf{x}_u^{(t)})=\prod_{j=1}^{m} 
p_\theta\!\left(c_{v_{t+1},j}\mid \mathbf{x}_u^{(t)}, c_{v_{t+1},<j}\right).
\end{equation}
where the index $j \in \{1,\ldots,m\}$ denotes the position of a discrete code within the Semantic ID sequence of the predicted item $v_{t+1}$.

Training minimizes the negative log-likelihood of the ground-truth next-item SID:
\begin{equation}
\mathcal{L}_{\text{gen}} = -\sum_{u}\sum_{t} \log p_\theta(\mathbf{z}_{v_{t+1}} \mid \mathbf{x}_u^{(t)}).
\end{equation}
At inference time, the model generates the next SID, which realizes the \textit{Next SID Prediction} process, and serves as the basis for next-item recommendation. In particular, the generated SID can be deterministically mapped back to an item via the codebook (or a retrieval step over items sharing the same SID), enabling efficient next-item recommendation.

\vspace{-0.08in}
\section{Our Method: DIGER}
Conventional generative recommendation~\cite{rajput2023recommender} typically adopts a two-stage training paradigm. In contrast, we show that joint optimization operates over a strictly larger feasible solution space, while objective mismatch in the two-stage setting can induce arbitrarily large suboptimality even under exact optimization (Appendix~\ref{sec:appendix}, Remark~\ref{remark:jo}).
 To effectively achieve joint optimization via direct gradient flow, we propose \textbf{DIGER} (\underline{D}ifferentiable Semantic \underline{I}D for \underline{GE}nerative \underline{R}ecommendation). 
DIGER is trained via a two-part learning scheme inspired by the exploration--exploitation paradigm: 
\textbf{DRIL} (\underline{D}iffe\underline{R}entiable Semantic \underline{I}D with Exploratory \underline{L}earning) provides Gumbel-noise--based stochastic exploration to learn discrete semantic IDs with direct gradient flow, 
and \textbf{Uncertainty Decay} progressively reduces exploration and steers the model toward stable exploitation.
 Figure~\ref{fig:architecture_overview} illustrates the overall framework.

\vspace{-0.08in}

\subsection{DRIL: Differentiable Semantic ID with Exploratory Learning}

Directly applying the straight-through estimator (STE) yields deterministic hard assignments, which often lead to overconfident early decisions and premature codebook collapse (Figure \ref{fig:gumbel-vs-ste}), resulting in imbalanced code usage and unstable optimization. From a distributional viewpoint, such deterministic assignments correspond to highly concentrated, low-entropy selection distributions that limit exploration over the codebook. In Appendix \ref{sec:appendix}, Remark \ref{remark:entropy}, we show that increasing assignment entropy increases the effective number of utilized codes, providing a principled motivation for entropy-aware design. Based on this insight, we inject noise into the assignment logits to spread probability mass across multiple codes, and propose a noise-driven assignment mechanism, DRIL, to encourage controlled exploration.

Compared to common Gaussian noise~\cite{lu2020mean}, we adopt Gumbel noise~\cite{jang2016categorical} better captures the probabilistic nature of categorical sampling, inducing a softmax-based selection process in which higher-similarity codes are exponentially more likely to be selected. This aligns more closely with the actual selection process over a codebook, where stronger matches naturally receive higher sampling probabilities. By injecting Gumbel noise into the logits, our method enables stochasticity in code selection, facilitating early-stage exploration while preserving differentiability for end-to-end optimization.

For an item $v$, let the (LLM $\rightarrow$ RQ-VAE encoder) pipeline output a continuous representation $\mathbf{r}_v$ (or $\mathbf{r}_{v,j}$ for the $j$-th code position under position-wise quantization). For each position $j\in\{1,\ldots,m\}$, we compute similarity logits to the codebook $\{\mathbf{e}_i\}_{i=1}^{K}$:
\begin{equation}
\ell_{v,j,i} = \mathrm{sim}(\mathbf{r}_{v,j}, \mathbf{e}_i), \quad i\in\{1,\ldots,K\}.
\end{equation}
To add the Gumbel noise, we then obtain a Gumbel-Softmax distribution:
\begin{equation}
\tilde{y}_{v,j,i}=
\frac{\exp\!\left((\ell_{v,j,i}+ g_{v,j,i})/\tau\right)}
{\sum_{k=1}^{K}\exp\!\left((\ell_{v,j,k}+ g_{v,j,k})/\tau\right)},
\end{equation}
where $g_{v,j,i}\sim \mathrm{Gumbel}(0,1)$ are i.i.d., $\tau$ is the temperature.

\noindent \textbf{Hard SID for forward indexing.}
In the forward pass, we take a hard code by argmax over the noisy logits (equivalently over $\tilde{\mathbf{y}}_{v,j}$):
\begin{equation}
c_{v,j} = \arg\max_{i}\,(\ell_{v,j,i}+ g_{v,j,i}),
\end{equation}
yielding the Semantic ID $\mathbf{z}_v=(c_{v,1},\ldots,c_{v,m})$, which is used to index the embedding table in the generative recommender (Section~\ref{sec:preliminary}).

\noindent \textbf{Soft update for backpropagation.} 
Gradients are computed through the soft probabilities $\tilde{\mathbf{y}}_{v,j}$. Concretely, we use the soft embedding
\begin{equation}
\bar{\mathbf{e}}_{v,j}=\sum_{i=1}^{K}\tilde{y}_{v,j,i}\,\mathbf{e}_i
\end{equation}
to perform a \emph{soft} update of the RQ-VAE codebook/embedding table, while keeping the hard $c_{v,j}$ to form the discrete SID token sequence. The Gumbel noise encourages exploration and improves codebook utilization. 

\vspace{-0.1in}
\subsection{Two Uncertainty Decay Strategies}
 DRIL promotes exploration over discrete codes by sampling diverse code selections during training, but the injected noise perturbs the forward pass and can create a mismatch with inference.  Since validation and inference use deterministic semantic ID assignment (for example, a hard argmax over code similarities) for stability and reproducibility, large Gumbel noise in training later can cause the semantic IDs effectively used during training to differ from those used at inference. Therefore, we adopt an exploration-to-exploitation schedule that reduces the uncertainty introduced by Gumbel noise in the later stage of training, better aligning the training objective with the inference-time objective. To this end, we propose two uncertainty decay strategies to reduce the uncertainty\footnote{In many Gumbel-Softmax applications (e.g., computer vision and speech)~\cite{jang2016categorical,takida2022sq}, the discrete token is not the end goal; it is a means to quantize/compress representations. Accordingly, prior work typically anneals the temperature $\tau$ so that assignments become increasingly one-hot. In our setting, however, the semantic ID itself is the target and must be produced as a discrete output at inference, so temperature annealing is less effective. We instead keep $\tau$ fixed and reduce the uncertainty toward zero. Empirically, varying $\tau$ has little effect on our results.}:

\noindent \textbf{(1) Standard Deviation Uncertainty Decay (SDUD).}
Gumbel noise's standard deviation $\sigma$ controls the stochasticity of SID assignment.
A larger $\sigma$ corresponds to higher uncertainty and encourages exploration,
whereas $\sigma \!\approx\! 0$ yields near-deterministic assignments that are better
aligned with inference-time selection.
Therefore, we design an auxiliary objective whose \emph{stationary point}~\cite{graves2011practical} provides an explicit link between the task loss and the optimal noise level, making the training dynamics more interpretable.

Specifically, we couple the main generative objective $\mathcal{L}_{\text{gen}}$ with $\sigma$ as:
\begin{equation}
\mathcal{L}_{\sigma}
=
\frac{\mathcal{L}_{\text{gen}}}{2(\sigma+\lambda)^2}
+
\log(\sigma+\lambda),
\end{equation}
where $\mathcal{L}_{\text{gen}}$ is the next-SID prediction loss,
$\sigma \ge 0$ is the noise scale (learned through optimization),
and $\lambda>0$ is a hyperparameter. Importantly, $\lambda$ not only avoids numerical issues when $\sigma\!\to\!0$, but also determines when $\sigma$ can effectively reach $0$ (i.e., the desired turning point from exploration to exploitation). In practice, we search $\lambda$ so that $\sigma\!=\!0$ happens near the optimal transition stage.

Taking the partial derivative w.r.t.\ $\sigma$ and setting it to zero gives a closed-form equilibrium (we detail the steps in \autoref{sec:derive_of_sigma}):
\[
\frac{\partial \mathcal{L}_{\sigma}}{\partial \sigma}=0
\quad\Rightarrow\quad
(\sigma+\lambda)^2=\mathcal{L}_{\text{gen}}
\quad\Rightarrow\quad
\sigma^\star = \max\!\big(0,\sqrt{\mathcal{L}_{\text{gen}}}-\lambda\big).
\]
This demonstrates that as $\mathcal{L}_{\text{gen}}$ decreases during training, the optimal noise scale $\sigma^\star$ shrinks accordingly. Furthermore, when $\sqrt{\mathcal{L}_{\text{gen}}}\approx \lambda$, $\sigma^\star$ approaches $0$, yielding nearly deterministic SID assignment and reducing the train--test mismatch.

\noindent \textbf{(2) Frequency-based Uncertainty Decay (FrqUD).}
Another uncertainty decay strategy is derived directly from codebook utilization in the implementation. The key idea is that \emph{high-frequency} codes indicate over-reuse (i.e., some codes are sampled too often), so we apply Gumbel noise to these ``hot'' codes to encourage exploration and improve code coverage. Conversely, \emph{low-frequency} codes are already under-used, so we keep their sampling stable and close to inference by using the standard deterministic assignment. Concretely, we compute an (EMA-smoothed) usage frequency for each code and threshold it by a ratio of the uniform baseline~\cite{gardner2006exponential}.

Let $f_i^{(e)}$ denote the usage frequency of code $i$ at epoch $e$, estimated either empirically within epoch $e$ or via an exponential moving average (EMA) over epochs:
\begin{equation}
f_i^{(e)} \leftarrow \beta f_i^{(e-1)} + (1-\beta)\hat f_i^{(e)}, \quad \beta\in[0,1),
\end{equation}
where $\hat f_i^{(e)}$ is the raw empirical frequency at epoch $e$ and $\sum_{i=1}^{K} f_i^{(e)} = 1$.
Under uniform usage, the average frequency is $\bar f = 1/K$. We define the hot-code threshold as:
\begin{equation}
\gamma = r\,\bar f = \frac{r}{K},
\end{equation}
where $r>0$ is a tunable \emph{threshold ratio} (e.g., $r=1.5$). We then identify over-used (hot) codes:
\begin{equation}
\mathcal{I}_{\text{high}}^{(e)} = \{\, i \mid f_i^{(e)} > \gamma \,\}, 
\qquad 
\mathcal{I}_{\text{low}}^{(e)} = \{1,\dots,K\}\setminus \mathcal{I}_{\text{high}}^{(e)}.
\end{equation}
For $i \in \mathcal{I}_{\text{high}}^{(e)}$, we update assignments with Gumbel noise. For $i \in \mathcal{I}_{\text{low}}^{(e)}$, we disable Gumbel noise, yielding a deterministic distribution:
\begin{equation}
y_i =
\begin{cases}
\displaystyle
\frac{\exp\!\left((\ell_i + g_i)/\tau\right)}
{\sum_{j=1}^{K}\exp\!\left((\ell_j +  g_j)/\tau\right)},
& i \in \mathcal{I}_{\text{high}}^{(e)}, \\[1.2em]
\displaystyle
\frac{\exp\!\left(\ell_i/\tau\right)}
{\sum_{j=1}^{K}\exp\!\left(\ell_j/\tau\right)},
& i \in \mathcal{I}_{\text{low}}^{(e)} .
\end{cases}
\end{equation}
where $\ell_i$ denotes the similarity logits produced by the encoder for code $i$, and $\tau$ is the temperature.

Both strategies aim to reduce Gumbel-noise-induced uncertainty later in
training while retaining the benefits of exploration earlier on, thereby improving training--inference objective alignment.

\subsection{Model Training}
We train the whole framework with joint optimization, so gradients flow through semantic ID assignment into the RQ-VAE codebook and the generative recommender. 
The training is primarily optimized using $\mathcal{L}{\text{gen}}$, the next-SID autoregressive generation loss (defined in Section~\ref{sec:preliminary}). To prevent the SID representations from drifting too far from the original data distribution, we additionally incorporate $\mathcal{L}{\text{vq}}$ and $\mathcal{L}{\text{recon}}$, the standard RQ-VAE losses, implemented following common formulations \cite{rajput2023recommender,lee2022autoregressive}.
\begin{table}[!htbp]
\centering
\setlength{\tabcolsep}{2pt}
\renewcommand{\arraystretch}{0.9}
\caption{Dataset statistics.}
\vspace{-0.15in}
\label{tab:dataset_stats}
\begin{tabular}{lrrrrr}
\hline
\multirow{2}{*}{Dataset} & \multirow{2}{*}{\#Users} & \multirow{2}{*}{\#Items} & \multirow{2}{*}{\#Interactions} & \multirow{2}{*}{Avg. Length} & \multirow{2}{*}{Sparsity} \\
\\
\hline
B-Shop & 22,363 & 12,101 & 198,502 & 8.88 & 0.9993 \\
I-Shop & 24,772 & 9,922 & 206,153 & 8.32 & 0.9992 \\
Yelp & 30,431 & 20,033 & 304,524 & 10.01 & 0.9995 \\
\hline
\end{tabular}
\vspace{-0.1in}
\end{table}
The main difference is that the RQ-VAE decoder is omitted from our training and inference pipeline: because our RQ-VAE's objective is to produce semantic IDs, we evaluate the reconstruction term directly on the RQ-quantized representations, rather than decoding them back to the original modality.
In practice, the model is initialized from a pretrained RQ-VAE, under which both $\mathcal{L}{\text{vq}}$ and $\mathcal{L}{\text{recon}}$ remain at a consistently smaller scale during training. As a result, parameter updates are primarily driven by $\mathcal{L}{\text{gen}}$, while the other terms stabilize optimization. We therefore optimize the joint objective $\mathcal{L}{\text{gen}}+\mathcal{L}{\text{vq}}+\mathcal{L}{\text{recon}}$, which corresponds to a unified training objective~\cite{liang2018variational}. Note that although this study addresses the code collapse problem, we intentionally did not incorporate a code-diversity loss~\cite{wang2024learnable}. 
This was done to maintain clarity and to isolate the effect of the differentiable SID alone.

\section{Experimental Setup}

\noindent \textbf{Datasets.} 
We conducted experiments on three public datasets commonly used in generative recommendation research, including two shopping datasets and one food-review dataset. For simplicity, we refer to them as B-Shop, I-Shop, and Yelp. B-Shop contains user interactions with cosmetic products, while I-Shop contains user interactions with music-related products. Dataset statistics are summarized in Table~\ref{tab:dataset_stats}. The B-Shop and I-Shop datasets consist of product review data accompanied by item descriptions; following common practice, we use the associated product text for semantic encoding. The Yelp dataset contains restaurant reviews, for which the restaurant name and business categories are incorporated as content features. For all three datasets, review texts are not used for item representation and are only employed to construct interaction records. We adopt the data processing pipeline proposed in \cite{wang2024learnable}, which employs a LLaMA-7B model to generate item content from the original titles and descriptions.
For each dataset, we treated a user's interaction history as the input sequence and the subsequent interaction as the prediction target. We followed standard preprocessing~\cite{kang2018self}: filtering users and items with fewer than five interactions~\cite{hua2023index}.

\noindent \textbf{Evaluation.} In this paper, we used Recall@10 and NDCG@10 (Normalized Discounted Cumulative Gain) as evaluation metrics. We followed the widely used leave-one-out evaluation protocol~\cite{he2017neural,wang2024learnable,liu2024generative}. All reported results were computed on the test set. Additionally, the predicted item was ranked against the full item set.

\noindent \textbf{Baselines.} We first adopt STE~\cite{bengio2013estimating} as our primary baseline, which represents a naive approach for learning differentiable Semantic IDs and is the default optimization choice in RQ-VAE-style discrete token learning. In addition, we compare with representative traditional collaborative-filtering and sequential recommendation methods, including MF~\cite{koren2009matrix}, LightGCN~\cite{he2020lightgcn}, Caser~\cite{tang2018personalized}, HGN~\cite{ma2019hierarchical}, SASRec~\cite{kang2018self}, and BERT4Rec~\cite{sun2019bert4rec}. We further include recent generative recommenders, namely BIGRec~\cite{bao2025bi}, P5~\cite{geng2022recommendation}, LETTER~\cite{wang2024learnable}, and ETEGRec~\cite{liu2024generative}. Among them, TIGER~\cite{rajput2023recommender} serves not only as a strong generative baseline but also as the main reference two-stage framework for our method following the mainstream approaches~\cite{liu2024generative,wang2024learnable};
All implemented methods were tuned for each dataset using the validation set. For a fair comparison, we report most results from~\cite{wang2024learnable} and ensure that the experimental configurations match so the results are directly comparable.

\noindent \textbf{Implementation Details.}
We followed the settings in \cite{liu2024generative}, and initialized the item tokenizer with a pretrained RQ-VAE, set the codebook size to $K=256$, and used code sequence length $m=3$ plus one code as the conflict code~\cite{rajput2023recommender}. Our generative recommender adopts a T5-style encoder--decoder backbone with 6 encoder and 6 decoder layers, hidden size 128. We trained the framework with AdamW with weight decay 0.05 and used early stopping based on validation performance. We tuned the Gumbel temperature $\tau$ from $\{0.5,1.0,2.0,4.0\}$ and set it to 2.0.  We tuned the learning rate for the recommender in $\{0.01,\,0.005,\,0.001,\,0.0005\}$
, the learning rate for the tokenizer in $\{10^{-4},10^{-5},5\times10^{-5},10^{-6}\}$
. We set $\beta$ to 0.25 in all experiments and found the method to be insensitive to small variations of this value. We tuned the hyperparameters $r$ and $\lambda$. Specifically, we perform a fine-grained grid search over $\lambda$  $\{1.0, 1.2, 1.4, 1.8, 2.0\}$, and tune $r$ over $\{1.0, 1.5, 2.0, 2.5, 3.0\}$.  All experiments were conducted on two H100 GPUs.

\begin{table*}[h]
\centering
\caption{Comparison of conventional two-stage generative recommendation and training with differentiable semantic IDs.}
\vspace{-0.1in}
\label{tab:performance_beauty_instruments}
\renewcommand{\arraystretch}{0.9}
\setlength{\tabcolsep}{4pt}
\begin{tabular}{l cccc cccc cccc} 
\toprule
\textbf{Model} & \multicolumn{4}{c}{\textbf{B-Shop}} & \multicolumn{4}{c}{\textbf{I-Shop}} & \multicolumn{4}{c}{\textbf{Yelp}} \\
\cmidrule(lr){2-5} \cmidrule(lr){6-9} \cmidrule(lr){10-13}
& \textbf{R@5} & \textbf{R@10} & \textbf{N@5} & \textbf{N@10} & \textbf{R@5} & \textbf{R@10} & \textbf{N@5} & \textbf{N@10} & \textbf{R@5} & \textbf{R@10} & \textbf{N@5} & \textbf{N@10} \\
\midrule
\textbf{Two-Stage}   & 0.0395 & 0.0610 & 0.0262 & 0.0331 & 0.0870 & 0.1058 & 0.0737 & 0.0797 & 0.0253 & 0.0407 & 0.0164 & 0.0213 \\
\hline
\textbf{STE} & 0.0076 &0.0134 &  0.0048 & 0.0067& 0.0450 & 0.0554 & 0.0330 &0.0360 & 0.0093 & 0.0147 & 0.0060 & 0.0077\\
\textbf{DIGER (FrqUD)} & 0.0440 & 0.0683 & \textbf{0.0294} & 0.0372 & \textbf{0.0915} & \textbf{0.1138} & \textbf{0.0772} & \textbf{0.0844} &0.0266 &0.0432 & 0.0173 & 0.0227\\
\textbf{DIGER (SDUD)}   & \textbf{0.0442} & 0.0657 & 0.0292 & 0.0361 & 0.0905 & 0.1124 & 0.0753 & 0.0823 & 0.0267 & \textbf{0.0439} & 0.0171 &0.0227 \\
\textbf{DIGER (SDUD+FrqUD)}& \textbf{0.0439} & \textbf{0.0696} & 0.0293 & \textbf{0.0376} & 0.0907 & 0.1127 & 0.0758
 & 0.0829 & \textbf{0.0273} &0.0437& \textbf{0.0175} & \textbf{0.0227} \\
\bottomrule
\end{tabular}
\vspace{-0.15in}
\end{table*}

\vspace{-0.05in}
\section{Experimental results}
In this section, we report our experimental results and address the following research questions:
\begin{itemize}
    \item \textbf{RQ1:} Does training with differentiable SID improve the performance of generative recommendation models?
    \item \textbf{RQ2:} How does the proposed DIGER compare with state-of-the-art methods?
    \item \textbf{RQ3:} How do different training strategies and key factors influence the final performance?
    \item \textbf{RQ4:} How do semantic ID assignments evolve during training, and how does decaying affect their stability, consistency, and utilization?
\end{itemize}
\vspace{-0.1in}

\begin{table}[h]
\centering
\setlength{\tabcolsep}{1.6pt}
\renewcommand{\arraystretch}{0.9}
\caption{Overall performance comparison between the baselines on generative recommender models on three datasets. The bold results highlight the better performance in comparing the backend models. R and N stand for Recall and NDCG. We denote "*" as the p-value < 0.05 over the second best baseline.}
\vspace{-0.1in}
\label{tab:performance_beauty_instruments_yelp_at10}
\setlength{\tabcolsep}{2pt}
\begin{tabular}{l cc cc cc} 
\toprule
\textbf{Model} & \multicolumn{2}{c}{\textbf{B-Shop}} & \multicolumn{2}{c}{\textbf{I-Shop}} & \multicolumn{2}{c}{\textbf{Yelp}} \\
\cmidrule(lr){2-3} \cmidrule(lr){4-5} \cmidrule(lr){6-7}
& \textbf{R@10} & \textbf{N@10} & \textbf{R@10} & \textbf{N@10} & \textbf{R@10} & \textbf{N@10} \\
\midrule
MF       & 0.0474 & 0.0191 & 0.0735 & 0.0412 & 0.0381 & 0.0190 \\
Caser    & 0.0347 & 0.0176 & 0.0710 & 0.0409 & 0.0263 & 0.0134 \\
HGN      & 0.0512 & 0.0266 & 0.1048 & 0.0774 & 0.0326 & 0.0159 \\
BERT4Rec & 0.0347 & 0.0170 & 0.0822 & 0.0608 & 0.0291 & 0.0159 \\
LightGCN & 0.0511 & 0.0260 & 0.1000 & 0.0728 & 0.0407 & 0.0207 \\
SASRec   & 0.0588 & 0.0313 & 0.0947 & 0.0690 & 0.0296 & 0.0152 \\
BigRec   & 0.0299 & 0.0198 & 0.0576 & 0.0491 & 0.0169 & 0.0142 \\
P5-SID & 0.0584 & 0.0335 & 0.0964 & 0.0730 & 0.0324 & 0.0170 \\
P5-CID   & 0.0597 & 0.0347 & 0.0987 & 0.0751 & 0.0347 & 0.0181 \\
LETTER   & \underline{0.0672} &  \underline{0.0364} & \underline{0.1122} &  \underline{0.0831} &  \underline{0.0426} & \textbf{0.0231} \\
TIGER    & 0.0610 & 0.0331 & 0.1058 & 0.0797 & 0.0407 & 0.0213 \\
ETEGRec  & 0.0615 & 0.0335 & 0.1106 & 0.0810 & 0.0415 & 0.0214 \\
\hline
DIGER (Ours)  & \textbf{0.0683$^{*}$} & \textbf{0.0372$^{*}$} & \textbf{0.1138$^{*}$} & \textbf{0.0844$^{*}$} & \textbf{0.0432} & \underline{0.0227} \\
\bottomrule
\end{tabular}
\vspace{-0.15in}
\end{table}

\subsection{Comparison with differentiable SID and conventional (RQ1)}
To address \textbf{RQ1}, we compare our differentiable semantic ID training framework DIGER with two representative alternatives: (i) a conventional two-stage generative recommendation pipeline (Two-Stage)~\cite{rajput2023recommender}, and (ii) a naive differentiable SID approach based on STE~\cite{bengio2013estimating}. We report experiments on three benchmarks (B-Shop, I-Shop, and Yelp). In addition, we evaluate several DIGER with different decaying strategies. The results are shown in  Table~\ref{tab:performance_beauty_instruments}. 

Table~\ref{tab:performance_beauty_instruments} shows that DIGER consistently improves over the conventional two-stage pipeline across all datasets and metrics. For instance, on B-Shop, DIGER raises R@10 from 0.0610 (Two-Stage) to 0.0657--0.0696 (DIGER), and N@10 from 0.0331 to 0.0361--0.0376. Similar trends are observed for I-Shop, where R@10 increases from 0.1058 to 0.1124--0.1138 and N@10 from 0.0797 to 0.0823--0.0844. On Yelp, DIGER also yields consistent gains (e.g., R@10 improves from 0.0407 to 0.0432--0.0439, and N@10 from 0.0213 to 0.0227). In contrast, the naive STE baseline performs poorly on all datasets (e.g., B-Shop R@10 drops to 0.0134), indicating severe optimization instability; empirically, this aligns with \emph{codebook collapse} behavior that prevents effective joint learning when discretization is handled in a purely straight-through manner.

Overall, these results provide empirical support for \textbf{RQ1}. Compared with a two-stage pipeline, DIGER unlocks additional gains by allowing the model to jointly optimize semantic ID construction and generation objectives. Moreover, the failure of STE highlights that \emph{a naive differentiable SID design is insufficient}; stable joint training requires the proposed DIGER to avoid collapse and to translate differentiability into measurable recommendation improvements.

\vspace{-0.1in}
\subsection{Comparison with SOTA baselines (RQ2)}
To address \textbf{RQ2}, we benchmark DIGER against a broad suite of strong sequential and generative recommenders, including classical collaborative filtering baselines, as well as recent generative recommendation methods. For fair comparison, we report R@10 and N@10 on the same three datasets (Table~\ref{tab:performance_beauty_instruments_yelp_at10}). Unless otherwise stated, our main DIGER result corresponds to the configuration with frequency-based uncertainty decay (\textbf{FrqUD}) as it provides strong and stable performance in practice (cf. Table~\ref{tab:performance_beauty_instruments}).

As shown in Table~\ref{tab:performance_beauty_instruments_yelp_at10}, DIGER achieves \textbf{state-of-the-art or best} performance on B-Shop and I-Shop
, outperforming all listed baselines. On Yelp, DIGER remains highly competitive: it achieves the best R@10 (0.0432), while its N@10 (0.0227) is comparable to the strongest baseline LETTER (0.0231). Notably, DIGER also consistently surpasses prior two-stage generative backbones such as TIGER across all datasets. For ETEGRec, its performance does not compare favorably with DIGER, despite requiring roughly twice as much training time. However, it outperforms the conventional approach, TIGER, slightly across all three datasets. This indicates that alternative distillation strategies for object alignment are effective to some extent; nonetheless, the remaining gap to DIGER is expected, as our method aligns representations through direct gradient-based joint optimization with differentiable semantic IDs.

Overall, we find that DIGER is competitive with, and often surpasses, state-of-the-art baselines on generative recommendation (\textbf{RQ2}). 
The small gap to LETTER on Yelp in N@10 is expected, as LETTER leverages additional collaborative signals (e.g., SASRec-derived embeddings) beyond pure text modeling, whereas DIGER is trained in a \emph{text-only} manner to better align with the core objective of generative recommendation.\footnote{Incorporating external collaborative embeddings can be viewed as a form of model ensembling or auxiliary information injection; exploring such hybridization is orthogonal to our focus and is therefore left as future work.}

\section{Ablation Study (RQ3)}
To answer \textbf{RQ3}, which investigates how different training strategies and key design choices influence the final performance, we conduct a comprehensive ablation study on DIGER. Our analysis focuses on three aspects: (i) the contribution of core components in differentiable semantic ID learning, (ii) the robustness to key uncertainty decay-related hyperparameters, and (iii) the effect of codebook capacity and semantic ID length. Unless otherwise stated, experiments are conducted on the B-Shop dataset, as similar trends are consistently observed on other datasets.

\noindent \textbf{Effect of core components.}
Table~\ref{tab:ablation_components_beauty} reports the impact of removing individual components. Removing the uncertainty decay strategies already causes a clear performance drop, highlighting their crucial role in facilitating the transition from exploration to exploitation during joint optimization. For the second part, we perform a consistent ablation based on DIGER-w/o UD, which retains Gumbel noise and the soft update. Removing the Gumbel noise causes a significant performance drop, highlighting its importance for effective exploration. The performance remains clearly better than STE, indicating that the soft update is superior to hard updates. Further removing the soft codebook update leads to a slight additional degradation. Furthermore, incorporating the design with temperature annealing yields similar but slightly inferior results, while substituting Gumbel noise with Gaussian noise causes a more pronounced degradation. This suggests that the asymmetric Gumbel distribution, which is designed to approximate categorical sampling by favoring higher-probability codes while preserving exploration, is better suited for discrete semantic ID learning than purely random perturbations.

\begin{table}[t]
\centering
\caption{Ablation of DIGER components on B-Shop Dataset. We omit other datasets due to page limitations.}
\vspace{-0.1in}
\renewcommand{\arraystretch}{0.8}
\label{tab:ablation_components_beauty}
\begin{tabular}{l|cccc}
\toprule
Model & R@5 & R@10 & N@5 & N@10 \\
\midrule
Two-Stage & 0.0395 & 0.0610 & 0.0262 & 0.0331 \\
Naive E2E (STE)  & 0.0076 & 0.0134 &   0.0048 & 0.0067 \\
DIGER-w/ UD & \textbf{0.0440} & \textbf{0.0683} & \textbf{0.0294} & \textbf{0.0372}\\
\midrule
DIGER-w/o UD & 0.0424 & 0.0679 & 0.0283 & 0.0365 \\
-w/o Gumbel Noise & 0.0168 & 0.0283 & 0.0104 & 0.0141 \\
-w/o soft update & 0.0422 & 0.0650 & 0.0281 & 0.0354 \\
\midrule
-w Gumbel tau annealing & 0.0419 & 0.0654 & 0.0273 & 0.0348\\
-w Gaussian Noise & 0.0389 & 0.0620 & 0.0253 & 0.0327\\
\bottomrule
\end{tabular}
\vspace{-0.1in}
\end{table}

\begin{figure}[t]
    \centering
    \includegraphics[width=\linewidth]{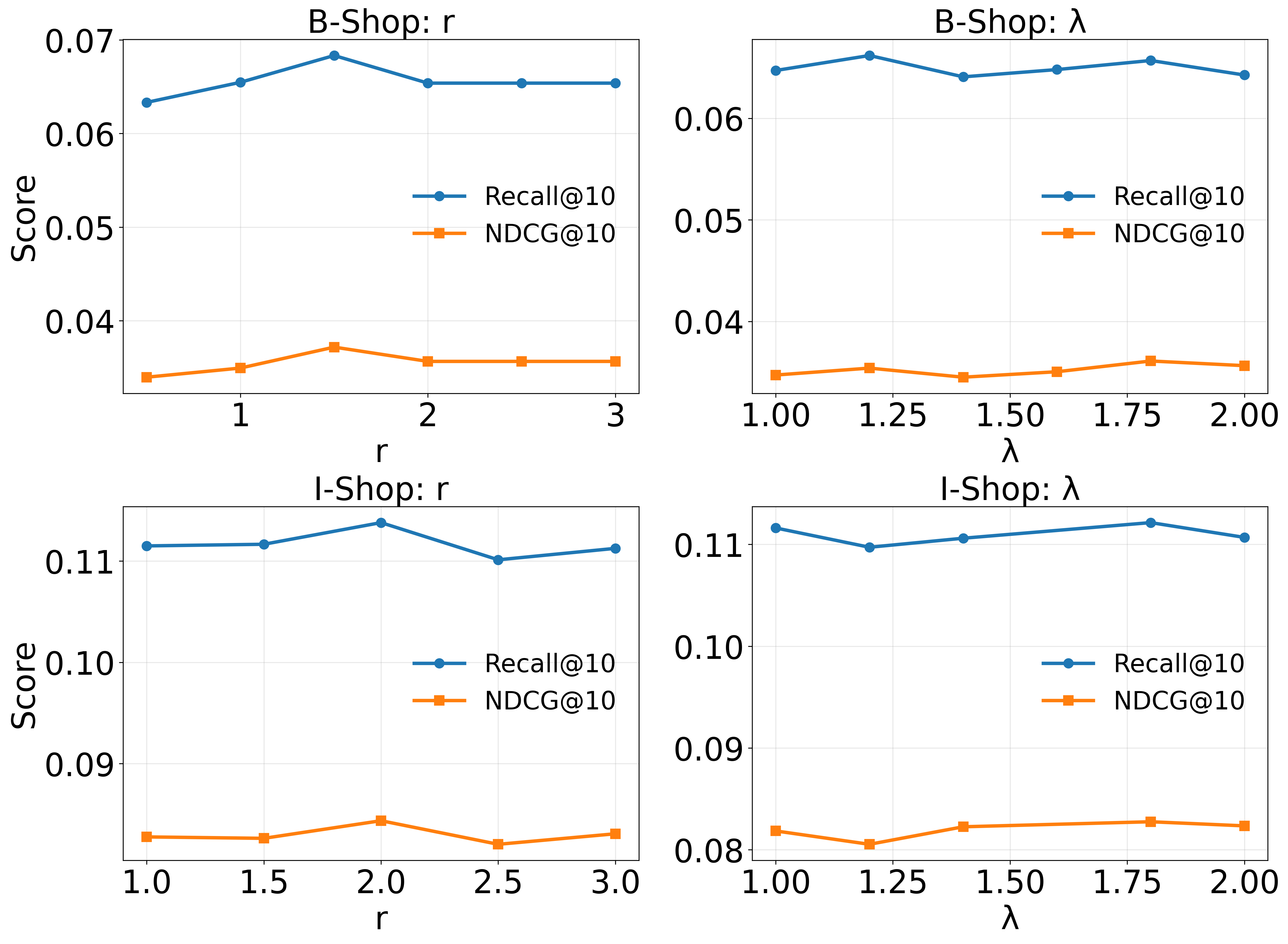}
    \vspace{-0.25in}
    \caption{Hyperparameter analysis on B-Shop and I-Shop. We vary the hot-threshold $r$ and $\lambda$, and report Recall@10 and NDCG@10.}
    \label{fig:hyperparam_2x2}
    \vspace{-0.2in}
\end{figure}

\begin{figure*}[t]
\centering
\includegraphics[width=0.95\textwidth]{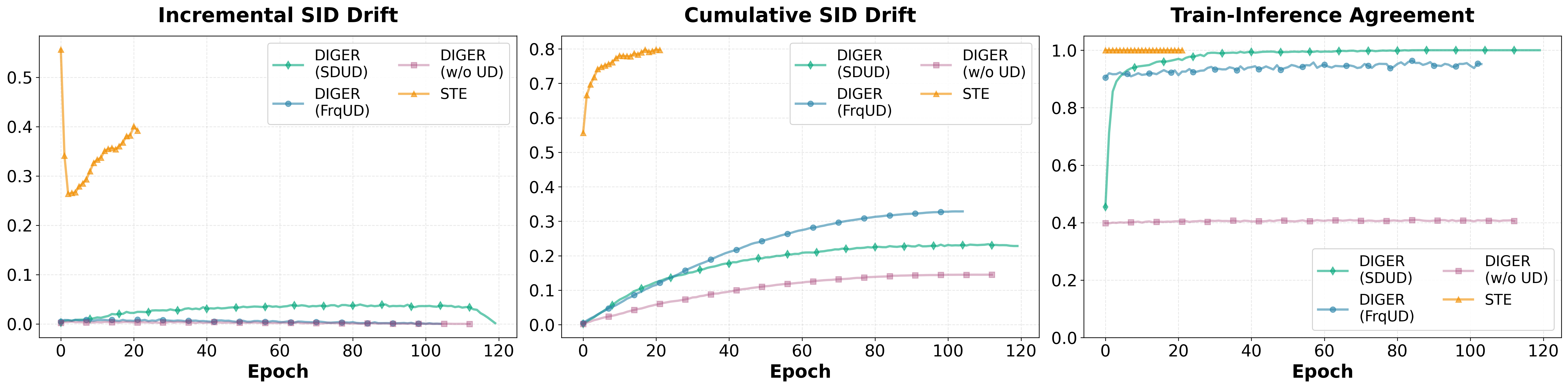}
\vspace{-0.15in}
\caption{Comparison of four semantic ID assignment methods over training epochs: (a) \textbf{Incremental SID Drift} measures the fraction of items whose semantic IDs change relative to the previous epoch; (b) \textbf{Cumulative SID Drift} measures the fraction of items whose semantic IDs differ from their initial values; (c) \textbf{Train--Inference Agreement} measures the fraction of items whose training-time sampled semantic IDs match the inference-time deterministic semantic IDs. Methods: \textbf{DIGER (w/o UD)}, \textbf{DIGER (FrqUD)}, \textbf{DIGER (SDUD)}, and \textbf{STE}.}
\vspace{-0.1in}
\label{fig:sid_drift_agreement}
\end{figure*}

\noindent \textbf{Sensitivity to Uncertainty Decay hyperparameters.}
We analyzed the sensitivity to $\lambda$, which controls when the uncertainty scale $\sigma$ approaches zero, and $r$, the hot-code threshold ratio in frequency-based decay. As shown in Figure~\ref{fig:hyperparam_2x2}, DIGER remains stable across a broad range of values. Very small $\lambda$ maintains excessive uncertainty late in training, while overly large values suppress early exploration; similarly, moderate $r$ effectively balances exploration and convergence, whereas extreme values weaken one of the two effects. Overall, performance variations remain limited within reasonable ranges, indicating that DIGER does not heavily rely on delicate hyperparameter tuning.

\noindent \textbf{Codebook capacity and semantic ID length.}
We further studied the effect of codebook size $K$ and semantic ID length $m$, with results shown in Table~\ref{tab:capacity_beauty}. Performance peaks at $K=256$, while both smaller and larger codebooks lead to inferior results, reflecting a trade-off between representational capacity and optimization stability. Reducing $m$ to 2 causes a substantial performance drop, indicating insufficient compositional capacity, whereas increasing $m$ to 4 yields only marginal gains. Overall, $K=256$ and $m=3$ provide a favorable balance between expressivity and stable training.

Overall, we find that DIGER's effectiveness stems from the joint design of controlled stochastic code selection, soft codebook updates, and uncertainty decay. The method is robust to hyperparameter choices and exhibits predictable behavior with respect to codebook capacity and semantic ID length \textbf{(RQ3)}.

\begin{table}[t]
\centering
\caption{Effect of codebook capacity ($K$) and SID length ($m$) on B-Shop Dataset.}
\vspace{-0.1in}
\renewcommand{\arraystretch}{0.75}
\label{tab:capacity_beauty}
\setlength{\tabcolsep}{10pt}
\begin{tabular}{c c|cccc}
\toprule
\multirow{2}{*}{$K$} & \multirow{2}{*}{$m$} & \multirow{2}{*}{R@5} & \multirow{2}{*}{R@10} & \multirow{2}{*}{N@5} & \multirow{2}{*}{N@10} \\
&&&&&\\
\midrule
128 & 3 & 0.0419 & 0.0655 & 0.0276 & 0.0352 \\
256 & 3 & \textbf{0.0440} & \textbf{0.0683} & \textbf{0.0294} & \textbf{0.0372} \\
512 & 3 & 0.0434 & 0.0660 & 0.0286 & 0.0359 \\
\midrule
256 & 2 & 0.0362 & 0.0566 & 0.0236 & 0.0301 \\
256 & 3 & \textbf{0.0440} & \textbf{0.0683} & 0.0294 & 0.0372 \\
256 & 4 & 0.0437 & 0.0673 & \textbf{0.0297} & \textbf{0.0373} \\
\bottomrule
\end{tabular}
\vspace{-0.1in}
\end{table}

\vspace{-0.1in}
\section{Semantic ID Dynamics and Stability (RQ4)}
To answer \textbf{RQ4}, we analyzed how semantic ID (SID) assignments evolve during joint training and how uncertainty decay influences their stability, consistency, and codebook utilization. We focused on three aspects: SID drift over training, training--inference agreement, and the code usage distribution. Representative Results are reported on the B-Shop dataset. Other datasets are omitted due to page limit.

\noindent \textbf{SID drift during joint training.}
To characterize how semantic IDs evolve, we measure both incremental SID drift, which captures epoch-to-epoch assignment changes, and cumulative SID drift, which measures deviations from the initial pretrained assignments. These two metrics reflect different aspects of SID dynamics and are not directly convertible. As shown in Figure~\ref{fig:sid_drift_agreement}, naive STE exhibits much larger and more abrupt incremental drift than all DIGER variants, particularly in early training, where a large fraction of items change their SIDs within a single epoch. This overly deterministic behavior leads to unstable optimization and is consistent with the observed collapse phenomena. In contrast, all DIGER variants evolve more gradually. While DIGER with uncertainty decay, especially FrqUD, shows larger cumulative drift, it remains below 40\%, indicating that most semantic IDs are preserved and only selectively refined. DIGER without uncertainty decay exhibits the smallest cumulative drift, below 15\%, suggesting limited adaptation. Importantly, variants with uncertainty decay consistently achieve better performance, supporting the benefit of allowing controlled SID changes during later-stage exploitation.

\begin{figure}[t]
\centering
\includegraphics[width=0.9\linewidth]{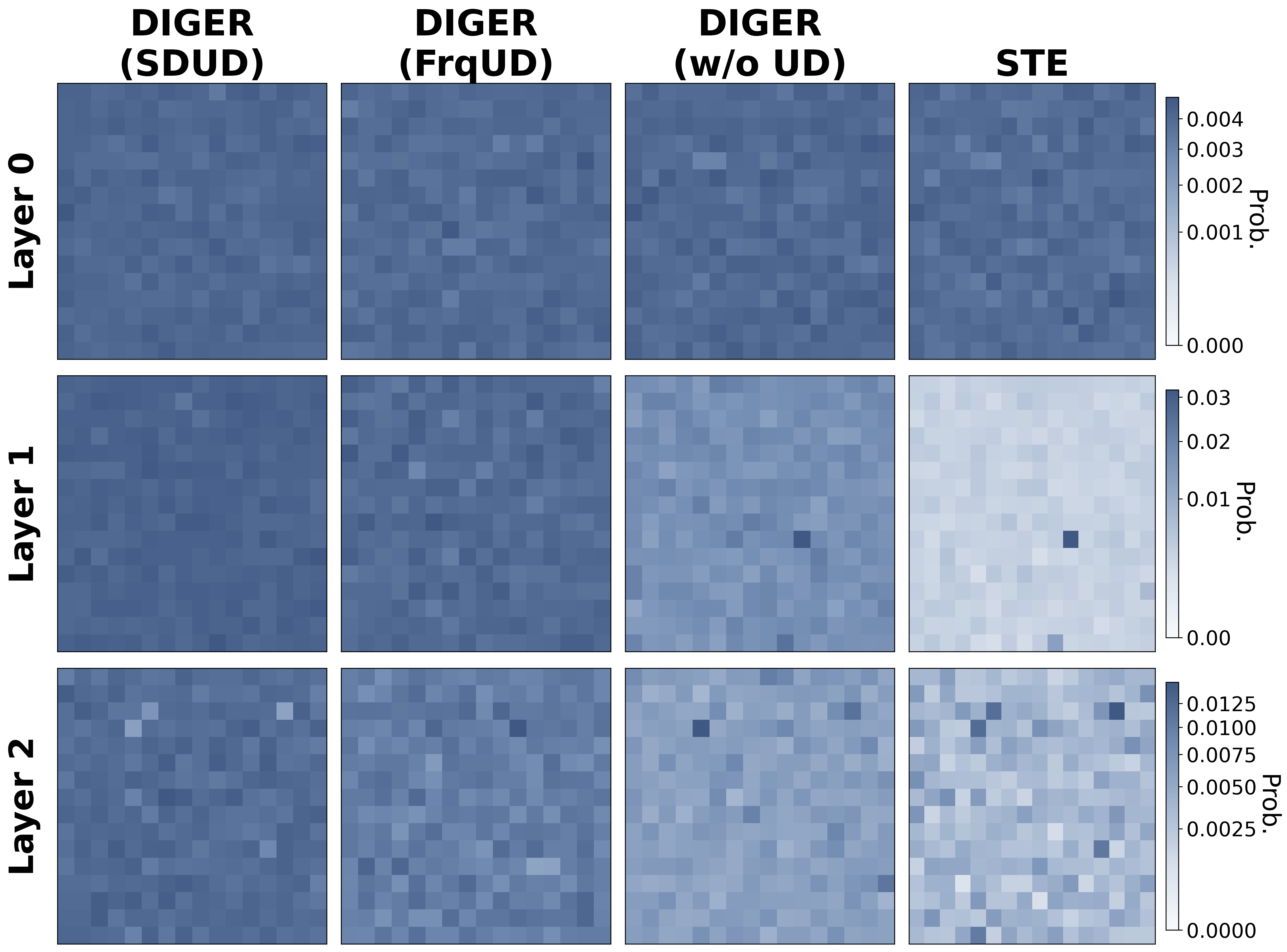}
\vspace{-0.1in}
\caption{Code usage distribution at the best checkpoint for four methods across three quantization layers. Each 16×16 grid visualizes the 256 codebook entries, with color intensity indicating usage probability.}
\label{fig:code_usage_heatmap}
\vspace{-0.2in}
\end{figure}

\noindent \textbf{Training--inference agreement.}
Because Gumbel noise introduces stochasticity during training, we further examine the agreement between training-time sampled semantic IDs and inference-time deterministic (argmax) semantic IDs. As shown in Figure~\ref{fig:sid_drift_agreement}, STE achieves perfect agreement by construction, but this rigid determinism coincides with severe instability in semantic ID learning and poor recommendation performance. DIGER without uncertainty decay exhibits persistently low agreement, indicating a large mismatch between stochastic training assignments and deterministic inference behavior. In contrast, DIGER with uncertainty decay rapidly attains and maintains high agreement after a brief early phase. Notably, FrqUD enforces frequency-aware uncertainty from the beginning by applying stochastic sampling only to overused codes, leading to consistently high agreement throughout training, while SDUD gradually increases agreement by progressively reducing the Gumbel noise's standard deviation $\sigma$. Together, these results show that uncertainty decay effectively reconciles stochastic exploration with inference-time determinism, enabling stable and reliable joint optimization.

\noindent \textbf{Code usage distribution.}
We further inspect the code usage distribution at the best checkpoint across all quantization layers, as shown in Figure~\ref{fig:code_usage_heatmap}. DIGER with uncertainty decay exhibits the most balanced utilization, where the majority of codes are frequently used across layers. DIGER without uncertainty decay shows moderate imbalance, while STE suffers from severe code collapse, with many codes rarely or never selected, especially in deeper layers. Notably, all methods exhibit relatively uniform usage at the first quantization layer, but differences become pronounced at higher layers, where uncertainty decay plays a critical role in preventing collapse and encouraging effective use of the codebooks.

Overall, SIDs evolve throughout training, and uncertainty decay makes this evolution stable and effective \textbf{(RQ4)}. It aligns training and inference assignments, prevents abrupt SID drift and code collapse, and promotes balanced code usage, enabling controlled refinement of semantic IDs with reliable inference behavior.

\vspace{-0.08in}
\section{Conclusions}

This work revisits a key limitation in generative recommendation systems: semantic IDs are typically learned for reconstruction and then frozen, creating a mismatch between semantic indexing and the recommendation objective. Differentiable semantic IDs are therefore required to bridge this gap. However, the naive solution based on straight-through estimators (STE) suffers from severe code collapse issues. To address this issue, DIGER introduces differentiable semantic ID exploratory learning (DRIL), which supports effective SID exploration, together with two uncertainty decay strategies that guide a smooth transition from exploration to exploitation. DIGER stabilizes training, prevents semantic ID collapse, improves codebook utilization, and achieves competitive results on public benchmarks.

Looking ahead, DIGER opens several directions for future work. Differentiable semantic ID can be extended beyond item-side representations, such as learning user-side or interaction-level discrete structures. More generally, exploring alternative estimators and optimization strategies for discrete latent variables may further improve stability and expressivity. Finally, integrating differentiable semantic IDs with richer collaborative signals or large language model–based recommendation frameworks remains a promising avenue for advancing generative recommendation for jointly training SID and recommender.

\appendix
\vspace{-0.08in}
\section{Theoretical Analysis}
\label{sec:appendix}
\paragraph{Theoretical justification.}
Let $\theta$ be the recommender parameters and $\phi$ be the semantic indexing parameters (codebooks).
Let the target (recommendation) objective be $\mathcal{L}_{\mathrm{rec}}(\theta,\phi)$.

\begin{theorem}[Two-stage freezing as restricted minimization over $\phi$]\label{theorema1}
Assume the two-stage scheme first selects $\phi$ from a subset $\mathcal{A}\subseteq\Phi$ (e.g., $\mathcal{A}=\arg\min_\phi \mathcal{L}_{\mathrm{aux}}(\phi)$ or the set of outputs of a pretraining routine), and then optimizes $\theta$ with $\phi$ fixed.
Define the value function $g(\phi):=\inf_{\theta\in\Theta}\mathcal{L}_{\mathrm{rec}}(\theta,\phi)$.
Then the best two-stage performance satisfies
\(
J_{\mathrm{2st}}^\star=\inf_{\phi\in\mathcal{A}} g(\phi)\),
\(
J_{\mathrm{e2e}}=\inf_{\phi\in\Phi} g(\phi),
\)
and hence $J_{\mathrm{e2e}}\le J_{\mathrm{2st}}^\star$.
Moreover, if $g$ is continuous and $\mathcal{A}$ is compact and does not contain any global minimizer of $g$ over $\Phi$, then $J_{\mathrm{e2e}}<J_{\mathrm{2st}}^\star$.
\end{theorem}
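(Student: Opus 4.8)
The statement is essentially a tautology once the definitions are unpacked, so the plan is to proceed by direct set-inclusion arguments rather than any analytic machinery. First I would establish the two identities. For $J_{\mathrm{2st}}^\star$: by definition the two-stage scheme is constrained to pick $\phi\in\mathcal{A}$ and then, for that fixed $\phi$, does its best over $\theta$, which yields exactly $\inf_{\theta\in\Theta}\mathcal{L}_{\mathrm{rec}}(\theta,\phi)=g(\phi)$; optimizing the outer choice of $\phi$ over the feasible set $\mathcal{A}$ gives $J_{\mathrm{2st}}^\star=\inf_{\phi\in\mathcal{A}}g(\phi)$. For $J_{\mathrm{e2e}}$: joint optimization ranges over all $(\theta,\phi)\in\Theta\times\Phi$, and since $\inf_{(\theta,\phi)}\mathcal{L}_{\mathrm{rec}}(\theta,\phi)=\inf_{\phi\in\Phi}\inf_{\theta\in\Theta}\mathcal{L}_{\mathrm{rec}}(\theta,\phi)=\inf_{\phi\in\Phi}g(\phi)$, we get the claimed form. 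The weak inequality $J_{\mathrm{e2e}}\le J_{\mathrm{2st}}^\star$ then follows immediately from $\mathcal{A}\subseteq\Phi$: an infimum over a larger set is no larger.

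For the strict inequality, the plan is to use the standard fact that a continuous function attains its infimum on a nonempty compact set. Since $g$ is continuous and $\mathcal{A}$ is compact (and nonempty, which I would note is implicit), there exists $\phi_0\in\mathcal{A}$ with $g(\phi_0)=\inf_{\phi\in\mathcal{A}}g(\phi)=J_{\mathrm{2st}}^\star$. By hypothesis $\phi_0$ is not a global minimizer of $g$ over $\Phi$, so there exists $\phi_1\in\Phi$ with $g(\phi_1)<g(\phi_0)$. Hence $J_{\mathrm{e2e}}=\inf_{\phi\in\Phi}g(\phi)\le g(\phi_1)<g(\phi_0)=J_{\mathrm{2st}}^\star$, which is the strict separation.

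The only genuine subtlety — and the step I would be most careful about — is ensuring the infima are well-defined and the attainment argument is legitimate: one needs $\mathcal{A}\neq\emptyset$ for $J_{\mathrm{2st}}^\star$ to be finite-or-meaningful, and one should be slightly careful that $g$ may take the value $-\infty$ in pathological cases, though under the mild assumption that $\mathcal{L}_{\mathrm{rec}}$ is bounded below (true for the log-likelihood-type objectives here) this is not an issue and $g:\Phi\to\mathbb{R}$ is well-defined. I would also remark that continuity of $g$ is itself a nontrivial regularity assumption (it follows, e.g., from joint continuity of $\mathcal{L}_{\mathrm{rec}}$ together with a compactness or equi-coercivity condition on $\Theta$ via a maximum-theorem argument), but since the theorem takes it as a hypothesis, I would simply invoke it. No real obstacle remains beyond this bookkeeping; the result is a clean instance of "relaxing a constraint cannot hurt, and strictly helps when the unconstrained optimum lies outside the feasible set."
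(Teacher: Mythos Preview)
Your proposal is correct and follows essentially the same route as the paper's proof: the identities come from unpacking the definitions, the weak inequality from $\mathcal{A}\subseteq\Phi$, and the strict inequality from the extreme value theorem applied to the continuous $g$ on the compact $\mathcal{A}$ together with the assumption that $\mathcal{A}$ contains no global minimizer. Your version is simply more explicit (and flags the implicit nonemptiness of $\mathcal{A}$), but there is no substantive difference.
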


\begin{proof}
The identities follow from definition of $g$ and the two-stage protocol. Since $\mathcal{A}\subseteq\Phi$, we have
$\inf_{\phi\in\Phi}g(\phi)\le \inf_{\phi\in\mathcal{A}}g(\phi)$.
The strict inequality under the stated conditions follows by the extreme value theorem and the assumption that $\mathcal{A}$ excludes all minimizers of $g$.
\end{proof}

\begin{theorem}[Arbitrary suboptimality under objective mismatch]\label{theorema2}
Assume there exists $\phi^\star \in \arg\min_{\phi}\min_{\theta}\mathcal{L}_{\mathrm{rec}}(\theta,\phi)$ and a (possibly different) $\phi_{\mathrm{aux}} \in \arg\min_{\phi}\mathcal{L}_{\mathrm{aux}}(\phi)$ with $\phi_{\mathrm{aux}}\neq \phi^\star$.
Then there exist continuous losses $\mathcal{L}_{\mathrm{rec}}$ and $\mathcal{L}_{\mathrm{aux}}$ such that the optimal two-stage solution is strictly worse than the jointly optimized optimum, and the gap can be made arbitrarily large:
\[
\inf_{\theta}\mathcal{L}_{\mathrm{rec}}(\theta,\phi_{\mathrm{aux}}) - \min_{\theta,\phi}\mathcal{L}_{\mathrm{rec}}(\theta,\phi) \ge M,
\quad \forall M>0.
\]
\end{theorem}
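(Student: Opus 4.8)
The plan is to prove the statement by explicit construction. This is an existence claim parametrized by $M$, so it suffices to exhibit, for each $M>0$, one pair of continuous losses realizing a two-stage gap of at least $M$. I would first isolate what the hypotheses actually give us: two \emph{distinct} points $\phi^\star\neq\phi_{\mathrm{aux}}$ of the parameter space $\Phi$ (which in the intended setting is $\mathbb{R}^d$, or more generally a metric space with metric $\rho$), together with the freedom to design $\mathcal{L}_{\mathrm{rec}}$ and $\mathcal{L}_{\mathrm{aux}}$ so that $\phi^\star$ is a joint minimizer of $\mathcal{L}_{\mathrm{rec}}$ and $\phi_{\mathrm{aux}}$ is the minimizer of $\mathcal{L}_{\mathrm{aux}}$. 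So I would reformulate the goal as: given distinct $\phi^\star,\phi_{\mathrm{aux}}$ and $M>0$, build continuous $\mathcal{L}_{\mathrm{rec}},\mathcal{L}_{\mathrm{aux}}$ with $\phi^\star\in\arg\min_\phi\min_\theta\mathcal{L}_{\mathrm{rec}}$, $\arg\min_\phi\mathcal{L}_{\mathrm{aux}}=\{\phi_{\mathrm{aux}}\}$, and two-stage gap $\ge M$.

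The core device is a Urysohn-type separating function. Define $h:\Phi\to[0,\infty)$ by
\[
h(\phi)=M\,\frac{\rho(\phi,\phi^\star)}{\rho(\phi,\phi^\star)+\rho(\phi,\phi_{\mathrm{aux}})}.
\]
Because $\phi^\star\neq\phi_{\mathrm{aux}}$, the denominator is strictly positive for every $\phi$, so $h$ is continuous, with $h(\phi^\star)=0$ and $h(\phi_{\mathrm{aux}})=M$. I would then set $\mathcal{L}_{\mathrm{rec}}(\theta,\phi)=h(\phi)+r(\theta)$ for any fixed continuous $r$ with finite infimum (for instance $r\equiv 0$ or $r(\theta)=\|\theta\|^2$), and $\mathcal{L}_{\mathrm{aux}}(\phi)=\rho(\phi,\phi_{\mathrm{aux}})^2$. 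Both are continuous by construction, and the additive split of $\mathcal{L}_{\mathrm{rec}}$ is chosen precisely so that the inner $\inf_\theta$ is a single constant independent of $\phi$, which will cancel exactly in the gap.

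The verification is then a short chain of equalities. First, $\min_{\theta,\phi}\mathcal{L}_{\mathrm{rec}}(\theta,\phi)=\min_\phi h(\phi)+\inf_\theta r(\theta)=\inf_\theta r(\theta)$, attained at $\phi=\phi^\star$, which confirms $\phi^\star\in\arg\min_\phi\min_\theta\mathcal{L}_{\mathrm{rec}}$; likewise $\arg\min_\phi\mathcal{L}_{\mathrm{aux}}=\{\phi_{\mathrm{aux}}\}$, so the constructed losses satisfy the stated hypotheses. By Theorem~\ref{theorema1} applied with $\mathcal{A}=\arg\min_\phi\mathcal{L}_{\mathrm{aux}}=\{\phi_{\mathrm{aux}}\}$, the best two-stage performance is $g(\phi_{\mathrm{aux}})=\inf_\theta\mathcal{L}_{\mathrm{rec}}(\theta,\phi_{\mathrm{aux}})=h(\phi_{\mathrm{aux}})+\inf_\theta r(\theta)=M+\inf_\theta r(\theta)$. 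Subtracting the joint optimum gives $\inf_\theta\mathcal{L}_{\mathrm{rec}}(\theta,\phi_{\mathrm{aux}})-\min_{\theta,\phi}\mathcal{L}_{\mathrm{rec}}(\theta,\phi)=M$, and since $M>0$ was arbitrary, the gap is unbounded over this family of constructions, which also yields the strict-worseness claim.

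I expect the only genuinely delicate point to be the topological one: guaranteeing a \emph{continuous} separating $h$ that takes a prescribed value at one point and $0$ at another. Over $\mathbb{R}^d$ (or any metric $\Phi$) the explicit ratio above settles this directly, with no appeal to abstract machinery; if one insisted on an abstract topological $\Phi$, one would instead invoke normality and Urysohn's lemma, but that is not needed for the finite-dimensional codebook parameters at hand. A secondary point worth a sentence in the writeup is that the decoupled form $\mathcal{L}_{\mathrm{rec}}(\theta,\phi)=h(\phi)+r(\theta)$ is what makes the $\theta$-infimum cancel cleanly, so the entire gap is carried by $h(\phi_{\mathrm{aux}})-h(\phi^\star)=M$; one could even take $r\equiv0$ to make the computation a one-liner, at the cost of a less ``interesting'' recommender term.
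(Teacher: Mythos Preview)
Your construction is correct and follows the same core idea as the paper's proof: exhibit explicit continuous losses with an additively separable $\mathcal{L}_{\mathrm{rec}}(\theta,\phi)=h(\phi)+r(\theta)$ so that the inner $\inf_\theta$ cancels and the entire gap is carried by $h(\phi_{\mathrm{aux}})-h(\phi^\star)$. The paper simply specializes to $\Phi=\mathbb{R}$, picks $\phi^\star=0$, $\phi_{\mathrm{aux}}=1$, sets $\mathcal{L}_{\mathrm{aux}}(\phi)=(\phi-1)^2$ and $\mathcal{L}_{\mathrm{rec}}(\theta,\phi)=\phi^2+\theta^2$, and then scales $\mathcal{L}_{\mathrm{rec}}$ by $M$ to get gap $M$. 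Your Urysohn-type ratio $h(\phi)=M\,\rho(\phi,\phi^\star)/(\rho(\phi,\phi^\star)+\rho(\phi,\phi_{\mathrm{aux}}))$ is a genuinely more general device: it works for \emph{any} prescribed pair of distinct points in \emph{any} metric $\Phi$, so it actually honors the ``given $\phi^\star\neq\phi_{\mathrm{aux}}$'' reading of the hypothesis rather than silently choosing them. The trade-off is readability: the paper's quadratic-in-$\mathbb{R}$ instance is a one-liner, while your version carries a little topological overhead (the well-definedness of the ratio) that is unnecessary for the finite-dimensional setting but buys you the cleaner logical match to the statement.
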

\begin{proof}[Proof (constructive)]
Take $\phi \in \mathbb{R}$ and define $\mathcal{L}_{\mathrm{aux}}(\phi) = (\phi-1)^2$, so $\phi_{\mathrm{aux}}=1$.
Let $\mathcal{L}_{\mathrm{rec}}(\theta,\phi) = (\phi-0)^2 + \theta^2$, so the jointly optimized optimum is attained at $(\theta^\star,\phi^\star)=(0,0)$ with value $0$.
The best two-stage value (freezing $\phi_{\mathrm{aux}}=1$) is $\inf_{\theta}\mathcal{L}_{\mathrm{rec}}(\theta,1)=1$.
Scaling $\mathcal{L}_{\mathrm{rec}}$ by $M$ yields gap $M$.
\end{proof}

\begin{remark}\label{remark:jo}
By Theorem~\ref{theorema1}, two-stage training optimizes the recommendation loss over a restricted subset of indexing parameters. Theorem~\ref{theorema2} shows that if this subset excludes the joint optimum due to objective mismatch, the resulting suboptimality can be arbitrarily large, even under exact optimization.
\end{remark}

\paragraph{Noise/entropy and codebook utilization.}
Let $q \in \Delta^{K-1}$ be the marginal code-usage distribution (e.g., $q_i=\mathbb{E}_{v}[y_i(v)]$), and define the effective number of used codes as $\mathrm{Eff}(q):=\exp(H(q))$ where $H(q)=-\sum_{i=1}^K q_i\log q_i$.

\begin{theorem}[Entropy regularization maximizes effective code usage]
For any $K\ge 2$, $\mathrm{Eff}(q)$ is maximized uniquely at the uniform distribution $q_i=1/K$, achieving $\mathrm{Eff}(q)=K$.
Equivalently, $H(q)\le \log K$ with equality iff $q$ is uniform.
\end{theorem}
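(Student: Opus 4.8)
The plan is to reduce the statement about $\mathrm{Eff}(q)=\exp(H(q))$ to the classical entropy bound $H(q)\le\log K$. Since $t\mapsto e^{t}$ is strictly increasing, $\mathrm{Eff}$ attains its maximum precisely where $H$ does, and the maximal value is $e^{\log K}=K$; so it suffices to show $H(q)\le\log K$ for all $q\in\Delta^{K-1}$, with equality if and only if $q$ is uniform. Throughout I would adopt the usual convention $0\log 0=0$ so that $H$ is well defined (and continuous) on the whole closed simplex.

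First I would rewrite the gap between the bound and the entropy as a Kullback--Leibler divergence against the uniform law $u=(1/K,\dots,1/K)$. For any $q\in\Delta^{K-1}$,
\[
\log K-H(q)=\sum_{i=1}^{K}q_i\log K+\sum_{i=1}^{K}q_i\log q_i=\sum_{i=1}^{K}q_i\log\frac{q_i}{1/K}=D_{\mathrm{KL}}(q\,\|\,u).
\]
Thus the theorem is equivalent to Gibbs' inequality $D_{\mathrm{KL}}(q\,\|\,u)\ge 0$ with equality iff $q=u$. Next I would prove this directly via Jensen's inequality applied to the strictly concave function $\log$. Restricting the sum to the support $S=\{i:q_i>0\}$,
\[
-D_{\mathrm{KL}}(q\,\|\,u)=\sum_{i\in S}q_i\log\frac{1/K}{q_i}\le\log\!\Big(\sum_{i\in S}q_i\cdot\frac{1/K}{q_i}\Big)=\log\frac{|S|}{K}\le\log 1=0,
\]
which yields $D_{\mathrm{KL}}(q\,\|\,u)\ge 0$, hence $H(q)\le\log K$ and $\mathrm{Eff}(q)\le K$.

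For the equality case, strict concavity of $\log$ forces the Jensen inequality to be tight only when $(1/K)/q_i$ is constant over $i\in S$, i.e. all $q_i$ with $i\in S$ are equal, say $q_i=1/|S|$; and the second inequality is tight only when $|S|=K$, so $q_i=1/K$ for every $i$. Conversely, $q=u$ clearly attains the bound, giving uniqueness of the maximizer. The argument is essentially routine; the only point requiring a little care is this equality analysis, in particular handling distributions on the boundary of the simplex where some $q_i=0$ — isolating the support $S$ and using $\log(|S|/K)\le 0$ to exclude $|S|<K$ is what pins down uniqueness. As an alternative I would remark that one can instead invoke strict concavity of $H$ on $\Delta^{K-1}$ together with its invariance under coordinate permutations: a strictly concave function on a convex set has at most one maximizer, and by symmetry that maximizer must be fixed by all permutations, hence uniform. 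I would present the KL/Jensen derivation as the main proof and mention the concavity argument as a short alternative.
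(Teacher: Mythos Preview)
Your proof is correct. The paper's own proof is a one-line appeal to the standard fact that $H$ is strictly concave on $\Delta^{K-1}$ and therefore has a unique maximizer, which by symmetry must be uniform --- exactly the alternative you mention at the end. Your main argument takes a more explicit route: you rewrite $\log K - H(q)$ as $D_{\mathrm{KL}}(q\,\|\,u)$ and then prove Gibbs' inequality directly via Jensen on $\log$, with a careful equality analysis that handles boundary points of the simplex by isolating the support $S$. This buys a fully self-contained derivation that does not presuppose the reader already knows $H$ is strictly concave, and it makes the uniqueness argument concrete rather than relying on an abstract ``strictly concave $\Rightarrow$ unique maximizer'' step. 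The paper's version is shorter but assumes the concavity property as known; your KL/Jensen route is the standard way one would actually \emph{prove} that property from scratch.
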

\begin{proof}
This is the standard maximal-entropy property on the probability simplex: $H(q)$ is strictly concave on $\Delta^{K-1}$ and attains its maximum at the uniform distribution.
\end{proof}

\begin{remark}\label{remark:entropy}
Therefore, adding an entropy bonus that increases the entropy of assignments (or of the induced marginal $q$) has a direct, formally justified effect: it pushes the solution toward higher $\mathrm{Eff}(q)$, i.e., better codebook utilization. This is a property of the objective itself, independent of a particular optimizer.
\end{remark}

\vspace{-0.15in}

\section{Derivation of the optimal $\sigma$.}
\label{sec:derive_of_sigma}
Let
\begin{equation}
L_\sigma(\sigma)=\frac{L_{\mathrm{gen}}}{2(\sigma+\lambda)^2}+\log(\sigma+\lambda),
\end{equation}
where $L_{\mathrm{gen}}$ and $\lambda$ are treated as constants w.r.t.\ $\sigma$. Differentiating w.r.t.\ $\sigma$ yields
\begin{equation}
\frac{\partial L_\sigma}{\partial \sigma}
=\frac{L_{\mathrm{gen}}}{2}\cdot(-2)(\sigma+\lambda)^{-3}+(\sigma+\lambda)^{-1}
=-\frac{L_{\mathrm{gen}}}{(\sigma+\lambda)^3}+\frac{1}{\sigma+\lambda}.
\end{equation}
Setting $\partial L_\sigma/\partial \sigma=0$ and multiplying both sides by $(\sigma+\lambda)^3$ gives
\begin{equation}
-(L_{\mathrm{gen}})+(\sigma+\lambda)^2=0
\quad\Rightarrow\quad
(\sigma+\lambda)^2=L_{\mathrm{gen}}.
\end{equation}
The domain of $\log(\sigma+\lambda)$ requires $\sigma+\lambda>0$, hence we take the positive root:
\begin{equation}
\sigma+\lambda=\sqrt{L_{\mathrm{gen}}}
\quad\Rightarrow\quad
\sigma^\star=\sqrt{L_{\mathrm{gen}}}-\lambda.
\end{equation}
Finally, if we additionally enforce the practical constraint $\sigma\ge 0$, the closed-form solution becomes
\begin{equation}
\sigma^\star=\max\{0,\sqrt{L_{\mathrm{gen}}}-\lambda\}.
\end{equation}

\bibliographystyle{ACM-Reference-Format}
\bibliography{sample-base}
\end{document}